\newtheorem{thm}{Theorem}
\newtheorem{lem}{Lemma}
\newtheorem{defin}{Definition}[section]
\newcommand{\mbb}{\mathbb}
\newcommand{\bra}[1]{\langle #1|}
\newcommand{\ket}[1]{|#1 \rangle}
\newcommand{\braket}[2]{\langle #1|#2\rangle}
\newcommand{\ketbra}[1]{\ket{#1}\bra{#1}}
\newcommand{\ketb}[2]{\ket{#1}\bra{#2}}
\newcommand{\ident}{\mathds{1}}
\newcommand\diag[1]{\mathrm{diag}\lb #1 \rb}
\def\ox{\otimes}
\def\lp{\left(}
\def\rp{\right)}
\def\ls{\left[}
\def\rs{\right]}
\def\lb{\left\{}
\def\rb{\right\}}
\def\a{\alpha}
\def\b{\beta}
\def\g{\gamma}
\def\d{\delta}
\def\s{\sigma}
 \def\O{\Omega}
\renewcommand\Re{\mathrm{Re}}
\renewcommand\Im{\mathrm{Im}}
\begin{document}
\title{Continuous decomposition of quantum measurements via qubit probe feedback}
\author{Jan Florjanczyk}
\author{Todd A. Brun}
\affiliation{Center for Quantum Information Science and Technology, \\
Communication Sciences Institute, Department of Electrical Engineering, \\
University of Southern California Los Angeles, CA 90089, USA.}
\date{\today}

\pacs{0.3.65.Aa, 03.65.Ta}

\keywords{quantum continuous measurement, quantum feedback control, random walk}

\begin{abstract}
It is known that any two-outcome quantum measurement can be decomposed into a continuous stochastic process using a feedback loop. In this article, we characterize which of these decompositions are possible when each iteration of the feedback loop consists of a weak measurement caused by an interaction with a probe system. We restrict ourselves to the case when the probe is a qubit and the interaction Hamiltonian between the probe and system is constant. We find that even given the ability to perform arbitrary unitary pulses throughout the continuous decomposition, only generalized measurements with two distinct singular values are achievable. However, this is sufficient to decompose a generalized qubit measurement using a qubit probe and a simple interaction Hamiltonian.
\end{abstract}

\maketitle

\section{Introduction}

In~\cite{nonlocality} the authors describe a scheme for generalized quantum measurements that allows for the probabilities of each outcome to be monitored continuously. The ability to halt the quantum measurement at a desired confidence then plays a crucial role in proving the inability of two parties sharing Local Operations and Classical Communication (LOCC) to successfully identify orthogonal product states. The continuous scheme, however, relies on attaching a large number of ancilla and the ability to perform large unitary operations on their joint system. In~\cite{weakuniversal, generalizedstochastic} the authors develop an alternative scheme that instead uses only diffusive weak measurements~\cite{simplemodel} with closed-loop feedback. 

In this paper we aim to characterize a continuous measurement procedure that lies between the two schemes above.  We are motivated by both theoretical and experimental considerations.  Many quantum mechanical systems either have naturally slow measurement times, or can only be probed weakly.  For example, the beautiful experiments of Haroche and Raimond \cite{rydberg} use a stream of Rydberg atoms to repeatedly probe the state of a microwave mode in a superconducting cavity.  Homodyne and heterodyne measurements are widely used in optics, and produce a continuous output current.  Superconducting qubits (or similar solid-state devices) can be measured by a weak dispersive coupling to a microwave cavity, which can be measured in turn by homodyne measurement.  Magnetic resonance force microscopy \cite{MRFM} can do single-spin measurement by a continuous measurement procedure.  Moreover, latency is sufficiently low in modern experiments that it is possible to do continuous feedback in real time, as has already been demonstrated in the microwave cavity/Rydberg atom system \cite{rydberg-feedback}.

For most quantum systems, there is no direct way to implement a given generalized measurement.  Generalized measurements are performed by unitary coupling to an ancillary system, followed by projective measurement on the ancilla.  In this paper, we explore the type of measurements that can be built up from a particular type of fixed weak interaction.  Moreover, the type of protocols we explore in this paper are examples of closed-loop quantum control, where continuous measurement is fed back not just to control the Hamiltonian of the system but the continuous measurement itself.  Such feedback can be used, for example, to improve the accuracy of phase measurements \cite{Phase}.  Here, we use it to decompose generalized measurements, but no doubt other types of protocols can be done using similar techniques.

We consider a situation where the system to be measured can only be probed weakly, and the experimenter has only limited control over the system itself.  The interaction between the probe and the system is fixed.  The experimenter, however, has complete control over both the preparation and measurement of the probes.  We wish to characterize what combinations of probe states, measurements, and interaction Hamiltonians yield continuous decompositions of generalized measurements.

Our results will concern only qubit probes. Although this may seem like a rather narrow class of experiments it is a natural setting for two-outcome measurements. Our results will apply equally well to the procedure outlined in~\cite{nonlocality, weakuniversal} that allows for any general $n$-outcome measurement to be decomposed into a series of two-outcome measurements.

This paper is organized as follows: In Section~\ref{sec:theory} we introduce the discretized steps of a continuous measurement (weak measurements), and review how they can be constructed from a qubit probe. In Section~\ref{sec:result} we state and prove our main result about interaction Hamiltonians between the probe and system. In Section~\ref{sec:unitaries} we briefly discuss the role of unitary pulses in our scheme. In Section~\ref{sec:qubit} we exhibit how our model can be used to decompose a generalized diagonal measurement on a system qubit using a qubit probe and the interaction Hamiltonian $Z \ox Z$. We summarize these findings in Section~\ref{sec:conclusions}.

\section{\label{sec:theory}Reversible weak measurements and random walks}
In~\cite{weakuniversal, infinitesimal} the authors show how to to decompose an instantaneous quantum measurement into a continuous process. Such a decomposition must respect two properties: that the state of the system being measured evolve smoothly, and that the entire continuous process allow for multiple possible outcomes. If we consider a discretized version of the process, then a sequence of diffusive weak measurements~\cite{simplemodel} satisfies both of these requirements. A weak measurement is parametrized by a ``strength" parameter $\d$ and its associated operators have the general form
\[ M_k \propto \ident + \d \hat{\varepsilon}_k, \]
where $\hat{\varepsilon}_k$ is an operator of bounded norm. Since the outcome of each measurement is a random function of the state, a sequence of weak measurements forms a stochastic process.

However, if a stochastic process is to be a faithful decomposition of a quantum measurement, then the result of the process must depend only on the state being measured and not on the total time of the evolution. In~\cite{weakuniversal} the authors show that this can be accomplished by casting the weak measurement steps as corresponding to a $1$-dimensional random walk indexed by the pointer variable $x$. In this case the result of the process does not depend on total time but, instead, on the drift of the pointer $x$ which is, in turn, dependent on the state. Each weak measurement step updates $x$ to $x \pm \d$ depending on the result. Any dependence on the duration of the process is accounted for by constructing step operators $M_{\pm}(x)$ that cancel when applied in opposite directions. More precisely, the scheme requires the steps to be \emph{reversible} (Def. ~\ref{def:reversibility}).
\begin{defin}[Reversibility condition] \label{def:reversibility}
We say a one-parameter family of weak measurement operators $\{ M_{\pm}(x) \}_x$ satisfies the \emph{reversibility condition} if
\begin{equation}
	\label{eqn:reversibility}
	M_{\mp}(x\pm \d) M_{\pm}(x) \propto \ident
\end{equation}
for all $x$ in a given interval.
\end{defin}
The step operators $M_{\pm}(x)$ in Eq.~\eqref{eqn:reversibility} are chosen such that the first operator updates $x$ to $x \pm \d$ and the second returns it to $(x \pm \d) \mp \d = x$. If the product of the two is proportional to the identity, then the operators have no net effect on the system state up to a normalization constant.

\begin{figure}
	\includegraphics[width=1.0\linewidth]{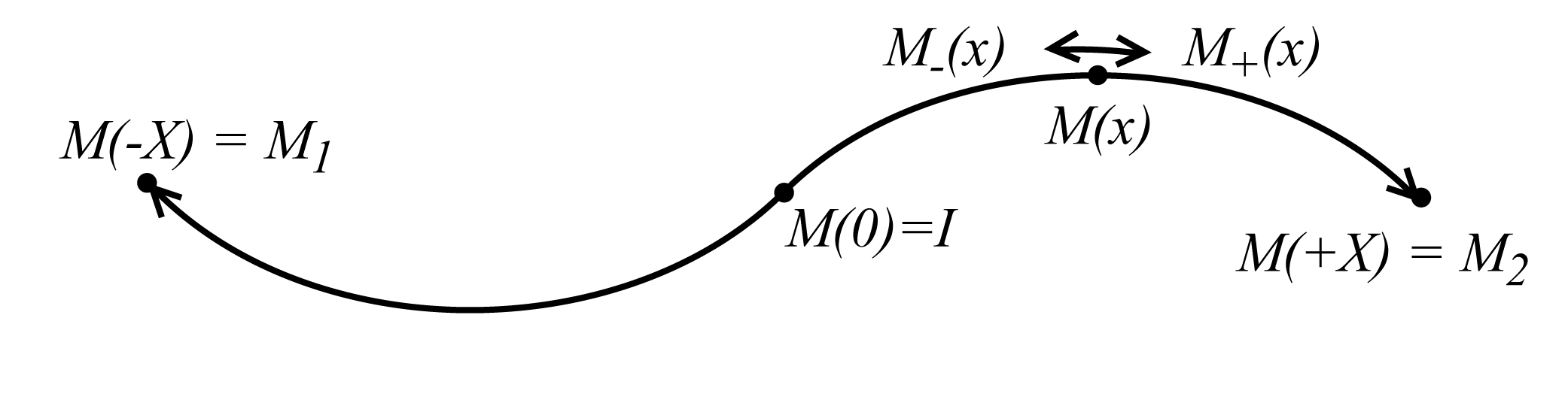}
	\caption{The random walk construction begins at $x=0$ with a weak measurement given by the step operators $M_{\pm}(0)$. Each successive step of the walk contributes to the total walk operator, Eq.~\eqref{eqn:totalwalkoperator}, via one of the two step operators and updates $x$ accordingly. The walk terminates at either endpoint $\pm X$ where the total walk operator is designed to match the desired instantaneous measurement.}
	\label{fig:graphicalmodel}
\end{figure}

To help clarify this construction, we provide a graphical representation in Figure~\ref{fig:graphicalmodel}. As one can see, there is one random walk performed by the pointer $x$ and another random walk performed by the evolution of the system state under each weak measurement. The path of both walks is uniquely parametrized by $x$.

The evolution operator $M(x)$ describing the total evolution of the state under the random walk above is given by the product of step operators from the initial state at $x=0$ to the current value of the pointer variable $x$,
\begin{equation}
	M(x) \propto \lb \begin{array}{lcr} \displaystyle\prod_{j=0}^{\lfloor |x| / \d \rfloor} M_+(j \d) & \hspace{0.25in} & x>0 \bigskip \\ \displaystyle\prod_{j=0}^{\lfloor |x| / \d \rfloor} M_-(-j \d) & & x<0 \end{array} \right.
	\label{eqn:totalwalkoperator}
\end{equation}

The particular instantaneous two-outcome measurement to which this decomposition corresponds is given by the endpoints of the random walk in the continuous limit,
\begin{equation}
	M_1 = \lim_{\d \rightarrow 0} M(X) \hspace{0.25in} \text{and} \hspace{0.25in} M_2 = \lim_{\d \rightarrow 0} M(-X) \label{eqn:matchingendpoints}
\end{equation}

Altogether, these operators define a \emph{continuous decomposition}.
\begin{defin}[Continuous decomposition] \label{def:continuousdecomposition}
We call a one-parameter family of weak measurements $\{ M_{\pm}(x) \}_x$ a \emph{continuous decomposition} of $\{ M_1, M_2 \}$ if $\{M_{\pm}(x)\}_x$ satisfy the reversibility condition (Def.~\ref{def:reversibility}) and the endpoints, as given by Eq.~\eqref{eqn:matchingendpoints}, match $M_1$ and $M_2$. We call $\{M_{\pm}(x)\}_x$ the \emph{step operators} of this decomposition.
\end{defin}

We are interested in performing the step operators described above via a projective measurement on a weakly interacting probe. In Figure~\ref{fig:circuit} we illustrate preparing a probe state, allowing the probe and system to interact for a short time $\d$, and then measuring the probe to update $x$. In the continuous limit, this feedback loop is considered to occur instantaneously. We call the circuit in Figure~\ref{fig:circuit} a \emph{probe feedback loop}.

\begin{defin}[Probe feedback loop]\label{def:probefeedbackloop} We say that a one-parameter family of $2$-outcome measurements $\{M_{\pm}(x)\}_x$ is generated by a \emph{probe feedback loop} if the measurement operators are of the form
\[ M_{\pm}(x) = \bra{\Phi^{\pm}(x)} e^{i \d H_{PS}} \ket{\s(x)}, \]
where $\ket{\s(x)}$ is the \emph{probe} qubit, $\ket{\Psi^{\pm}}$ are two orthogonal states associated with the destructive measurement of the \emph{detector}, and $H_{PS}$ is the interaction Hamiltonian between the probe $P$ and system $S$.
\end{defin}

\begin{figure}
\includegraphics[width=1.0\linewidth]{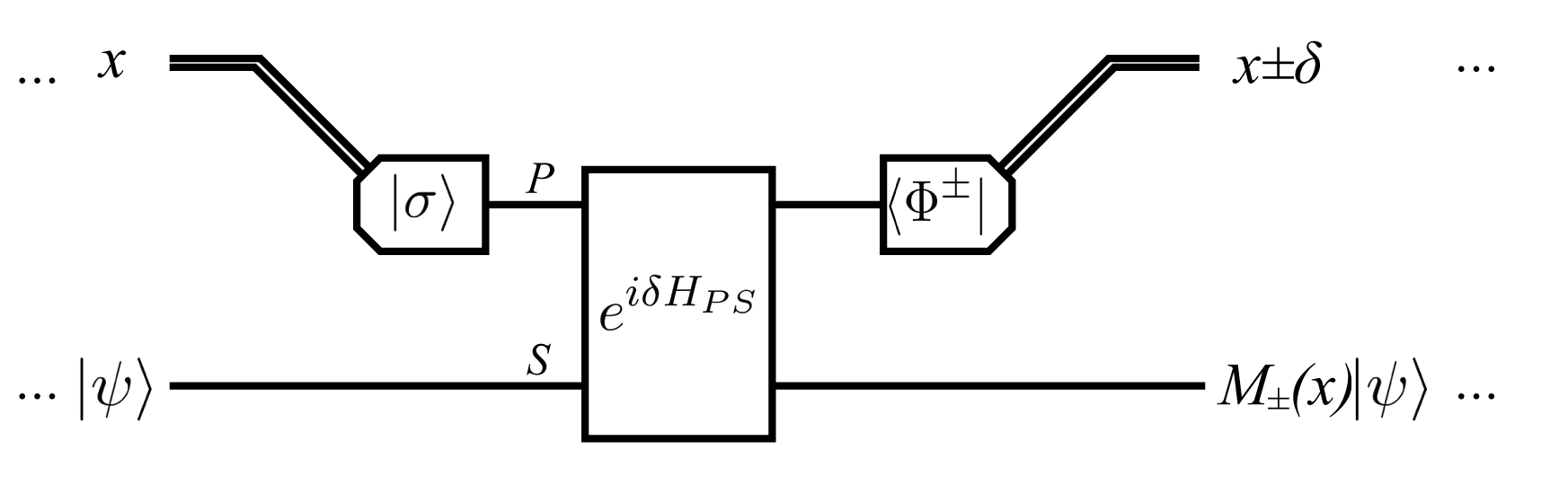}
\caption{In the above figure, we model a single step of the stochastic process for a pointer value $x$. Time flows from left to right. The value of the pointer variable is classical and thus symbolized by a double line. All single lines are quantum states. The probe state $\ket{\s (x)}_P$ is a qubit prepared according to the pointer variable. The probe and system $\ket{\psi}$ are then permitted to interact via the Hamiltonian $H_{PS}$. The probe is destroyed in the projective measurement $\{ \bra{\Phi^{\pm}(x)} \}$ which can depend on $x$ although we sometimes drop this dependence in our notation. The result of the measurement is then used to update the pointer variable for the following step.}
\label{fig:circuit}
\end{figure}

In some cases, the probe feedback loops are interlaced with weak unitary pulses on the system. These unitaries play a special role in that they encapsulate some notion of the experimentalist's power. In our efforts to characterize all measurements possible with a given $H_{PS}$, we will assume that an ``all-powerful" experimentalist will have the ability to perform any unitary pulse. We will examine constraints on the pulses available to the experimentalist in Section~\ref{sec:pulseconstraints}.

The most general Hamiltonian we can write for the interaction of a qubit probe and an arbitrary quantum system is the following
\begin{equation}
	H_{PS} = \ident \ox H_S + X \ox H_X + Y \ox H_Y + Z \ox H_Z
	\label{eqn:generalHamiltonian}
\end{equation}
where $\ident$, $X$, $Y$, $Z$ are the usual Pauli matrices on the probe state $P$, and $H_S$, $H_X$, $H_Y$, $H_Z$ are corresponding Hermitian matrices on the system $S$. Recalling our expression for the step operator above, we note that it now explicitly depends on the geometry of the probe state, the projective measurement, and the Pauli matrices. Combined with our requirement for step operator reversibility, Eq.~\eqref{eqn:reversibility}, we can begin to investigate which decompositions are possible.

Before we begin, however, we note that we are afforded one advantage through our random walk construction because the reversibility condition need not be met exactly. Consider that a classical random walk must take $O(N^2)$ steps to converge with fixed probability (where $N=X/\d$). This implies that the total walk operator in Eq.~\eqref{eqn:totalwalkoperator} will accumulate $N^2 O(\d)$ terms, $N^2 O(\d^2)$ terms, and so on. However, since the contribution of $N^2 O(\d^3)$ terms vanishes as $\d \rightarrow 0$ regardless of whether or not the step operators are exactly reversible, we only require that the reversibility condition of Eq.~\eqref{eqn:reversibility} be met only up to $O(\d^2)$. As we discover below, this requirement is still the source of the most stringent limitations on possible decompositions.

\section{\label{sec:result}Main result}

We consider Hamiltonians where the strength of various terms cannot be engineered. We call these ``fixed'' in the sense that at every weak-measurement step, the interaction between the probe and the system is identical. We allow for weak unitary pulses to be applied in between each step and we later specify which types of pulses are necessary to achieve the class of measurements we find below.

\begin{thm} \label{thm:MAIN}
Any continuous decomposition (Def.~\ref{def:continuousdecomposition}) with step operators generated by a probe feedback loop (Def.~\ref{def:probefeedbackloop}), can only match a $2$-outcome measurement of the form
\[M_1 = U_1 \lp \alpha \; \Pi_S + \beta \; \Pi_{S^{\perp}} \rp V, \]
\[M_2 = U_2 \lp \sqrt{1-\alpha^2} \; \Pi_S + \sqrt{1-\beta^2} \; \Pi_{S^{\perp}} \rp V, \]
where $U_1$, $U_2$, and $V$ are unitary matrices and $\Pi_S$, $\Pi_{S^{\perp}}$ are projectors onto orthogonal subspaces of the system space.
\end{thm}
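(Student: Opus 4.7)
The natural route is to Taylor-expand the step operators in $\d$, impose reversibility order-by-order, and then analyze the resulting constraint in the continuum limit. I would first write $\ket{\s(x)} = p(x)\ket{\Phi^+(x)} + q(x)\ket{\Phi^-(x)}$ with $p^2+|q|^2=1$ (gauge $p\in\mathbb R_{\ge 0}$), and decompose $H_{PS}$ in the probe basis $\lb\ket{\Phi^+(x)},\ket{\Phi^-(x)}\rb$ into Hermitian diagonal blocks $A(x),C(x)$ and an off-diagonal block $B(x)$. Using Eq.~\eqref{eqn:generalHamiltonian} together with the $SO(3)$ rotation $R(x)$ that this basis induces on the probe Pauli triple, the blocks relate to the Hamiltonian via $A+C=2H_S$, $B+B^\dagger=2\tilde H_X(x)$, and $B-B^\dagger=-2i\tilde H_Y(x)$, where $\tilde H_\mu(x)=v_\mu(x)\cdot(H_X,H_Y,H_Z)$ are the system-side operators coupled to the rotated Pauli components. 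To leading order this gives $M_+(x)=p\,\ident+i\d(pA+qB)+O(\d^2)$ and $M_-(x)=q\,\ident+i\d(pB^\dagger+qC)+O(\d^2)$.

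Substituting these into $M_-(x+\d)M_+(x)\propto\ident$ and equating the order-$\d$ operator piece to a scalar multiple of $\ident$ collapses reversibility to the single equation
\[
pq(A+C)+q^2 B+p^2 B^\dagger = \mu(x)\,\ident,
\]
which after substitution becomes $2pq\,H_S+\tilde H_X+i(p^2-q^2)\tilde H_Y=\mu\,\ident$. Splitting into Hermitian and anti-Hermitian parts yields $\tilde H_X(x)=\mu_R(x)\ident-2p(x)q(x)H_S$ and $(p(x)^2-q(x)^2)\,\tilde H_Y(x)=\mu_I(x)\,\ident$. The second equation forces a dichotomy: wherever $p(x)\neq q(x)$, one has $\tilde H_Y(x)\propto\ident$, so the anti-Hermitian piece of the step operator vanishes. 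After using the weak unitary pulses allowed by the model to absorb the Hermitian part of the effective generator, the walk endpoints in this branch collapse to scalar multiples of unitaries, which already matches the theorem's form with equal singular values $\alpha=\beta$. The remaining case is the ``balanced'' regime $p(x)=q(x)=1/\sqrt{2}$ at every step, in which only $\tilde H_X(x)\in\Span{\ident,H_S}$ is forced and $\tilde H_Y(x)$ is unconstrained by reversibility alone.

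In the balanced regime I would factor $p=1/\sqrt{2}$ out of each step and use the pulse freedom to cancel the Hermitian part of the effective generator, reducing the walk endpoint to $M_1\propto\mathcal T\!\exp\!\lp\int_0^X\tilde H_Y(x)\,dx\rp$ (and $M_2$ to the analogous expression from the $-$-walk), so that the singular-value structure of $M_{1,2}$ is entirely controlled by the family $\lb\tilde H_Y(x)\rb_x$. The main obstacle is to use the reversibility constraint to show that this family admits a common spectral decomposition of the form $a(x)\Pi_S+b(x)\Pi_{S^\perp}$ into two $x$-independent orthogonal projectors. The way in is geometric: the condition $v_1(x)\cdot(H_X,H_Y,H_Z)\in\Span{\ident,H_S}$ pins the rotated probe $X$-direction $v_1(x)$ to a restricted locus of $S^2$ determined by the fixed triple $(H_X,H_Y,H_Z)$ and $H_S$, and orthonormality of $(v_1,v_2,v_3)$ then confines $\tilde H_Y(x)=v_2(x)\cdot(H_X,H_Y,H_Z)$ to a fixed two-dimensional slice of operator space. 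Showing that every element of this slice shares a common orthogonal splitting $\Pi_S\oplus\Pi_{S^\perp}$ is the technical heart of the argument; once it is in hand, the time-ordered exponential factors into the claimed $U_1(\alpha\Pi_S+\beta\Pi_{S^\perp})V$ with the residual unitaries $U_1,V$ coming from the Hermitian rotations absorbed into the pulses, and the POVM identity $M_1^\dagger M_1+M_2^\dagger M_2=\ident$ then forces the singular values of $M_2$ to be $\sqrt{1-\alpha^2},\sqrt{1-\beta^2}$ in the same eigenbasis, completing the proof.
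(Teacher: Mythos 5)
Your opening moves parallel the paper's: expand the step operators in $\d$, work in a frame adapted to the probe and detector (your $p,q$ decomposition is essentially the paper's probe basis, and your ``balanced regime'' $p=q=1/\sqrt{2}$ is in fact forced from the outset by diffusiveness, Lemma~\ref{lem:probebasis}, rather than emerging as one branch of the reversibility analysis). But there is a genuine gap: you impose reversibility only at order $\d$. Because the walk takes $O(N^2)$ steps to terminate, the product in Eq.~\eqref{eqn:totalwalkoperator} accumulates $N^2$ terms of order $\d^2$, so reversibility must hold to $O(\d^2)$; this is stated explicitly in Section~\ref{sec:theory} and is where all the binding constraints live. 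The $O(\d^2)$ part of $M_{\mp}(x\pm\d)M_{\pm}(x)$ contains $\ls H_2,H_3\rs$, $\ls H_1,H_2\rs$, $\lb H_1,H_2\rb$ and $H_2^2$ (Eq.~\eqref{eqn:d2terms}); the conditions $B=0$ and $A\propto\ident$ (the latter via Lemma~\ref{lem:commutatoridentity}) force the three probe-basis components of the Hamiltonian to commute, hence to be simultaneously diagonal in an $x$-\emph{independent} basis. Your first-order equation alone does not deliver this, and without it your endpoint is a genuinely time-ordered exponential whose singular-value structure you cannot control.

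Relatedly, the step you defer as ``the technical heart'' --- that $\tilde H_Y(x)$ splits as $a(x)\Pi_S+b(x)\Pi_{S^\perp}$ for fixed projectors --- is both unproven and not what actually happens. In the paper, $H_2(x)$ may have arbitrarily many distinct eigenvalues $y_j(x)$; what collapses to two is not the spectrum of the generator but the set of admissible integration constants. Each $y_j(x)$ satisfies the same Riccati equation~\eqref{eqn:riccati}, with coefficients fixed by the probe parameters and independent of $j$, so the family is indexed by one constant $C_j$ per eigenvector; the completeness relation $M_1^{\dag}M_1+M_2^{\dag}M_2=\ident$ then makes each diagonal entry a fixed rational-quadratic function of $C_j$, admitting at most two roots. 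That quadratic is where ``two distinct singular values'' comes from; the geometric locus argument you sketch for $v_1(x)$ constrains the wrong component and has no counterpart to this mechanism, so as written the proof does not go through.
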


\subsection{Proof of Theorem~\ref{thm:MAIN}}

We begin with the following observation about weak-measurements generated from qubit probes; when interpreted on the Bloch sphere, there is a geometric constraint between the probe and the detector.
\begin{defin}[Probe basis] \label{def:probebasis}
For any qubit probe $\ket{\s(x)}$ and projective qubit measurement $\bra{\Psi^{\pm}}$, we define a real orthonormal basis for the Bloch sphere $\{\vec{n}_1(x), \vec{n}_2(x), \vec{n}_3(x)\}$. We call this a \emph{probe basis} if 
\begin{itemize}
\item $\vec{\s}$, the Bloch vector associated with the probe state $\ket{\s(x)}$, is no further than distance $\d$ from $\vec{n}_1$, 
\item $\vec{n}_2$ is the Bloch vector associated with $\ket{\Phi^+}$, 
\item and $\vec{n}_3 = \vec{n}_1 \times \vec{n}_2$.
\end{itemize}
\end{defin}
The constraint itself is expressed in the following lemma, the proof of which can be found in the Appendix.
\begin{lem}[Probe basis of a weak measurement] \label{lem:probebasis}
Any diffusive weak measurement given by a probe feedback loop (Def.~\ref{def:probefeedbackloop}) with a probe basis $\{\vec{n}_1(x), \vec{n}_2(x), \vec{n}_3(x)\}$ must have $\vec{n}_2 \cdot \vec{\s} \sim O(\d)$. Thus, an orthonormal basis for the Bloch sphere that approximates the probe and detector always exists.
\end{lem}

The lemma above yields an important tool for our analysis. Recall that for a given interaction Hamiltonian we seek to characterize the weak measurement step operators achievable via \emph{any} probe feedback loop parametrized by $x$. However, we can instead fix a probe basis for the probe feedback loop and consider a family of interaction Hamiltonians $H'_{PS}(x)$ which give rise to the same set of weak measurement step operators. This transformation is performed with the following identifications
\begin{eqnarray*}
	H_{PS} & = & X \ox H_X + Y \ox H_Y + Z \ox H_Z \\
	& = & \vec{n}_1(x) \cdot \vec{P} H_1(x) + \vec{n}_2(x) \cdot \vec{P} H_2(x) + \vec{n}_3(x) \cdot \vec{P} H_3(x)
\end{eqnarray*}
where $\vec{P} = \ls X \; , \; Y \; , \; Z \rs^T$ and
\[ \ls \begin{array}{c} H_1(x) \\ H_2(x) \\ H_3(x) \end{array} \rs = \ls \begin{array}{c|c|c} & & \\ \vec{n}_1(x) & \vec{n}_2(x) & \vec{n}_3(x) \\ & &  \end{array} \rs \ls \begin{array}{c} H_X \\ H_Y \\ H_Z \end{array} \rs . \]
We will also later abbreviate the vectors above as $\vec{H}'(x) = \ls H_1(x) \; , \; H_2(x) \; , \; H_3(x) \rs^T$ and $\vec{H} = \ls H_X \; , \; H_Y \; , \; H_Z \rs^T$. Finally, we define an interaction Hamiltonian in the probe basis
\[ H'_{PS}(x) = X \ox H_1(x) + Y \ox H_2(x) + Z \ox H_3(x), \]
and this yields an advantageous rewriting of the weak measurement step operators
\begin{eqnarray*}
	M_{\pm}(x) & = & \bra{\Psi^{\pm}(x)} \exp \lp i \d H_{PS}\rp \ket{\s(x)} \\
	& = & \bra{\pm} \exp \lp i \d H'_{PS}(x) \rp \ket{0} + O(\d).
\end{eqnarray*}
In this basis, the detector states $\ket{\Phi^{\pm}}$ are $\ket{\pm}$, the $\pm1$ eigenstates of $X$, and the initial state $\ket{\s(x)}$ is close to $\ket{0}$, the $+1$ eigenstate of $Z$. This choice also allows us to ignore the $\ident_P \ox H_S$ term in the general Hamiltonian Eq.~\eqref{eqn:generalHamiltonian} since $ \bra{\pm} \ident \ket{0} = \bra{\pm} Z \ket{0}$ and any contribution from $H_S$ can be rewritten as part of $H_3$.

In lemma~\ref{lem:probebasis} we required that the probe and detector Bloch vectors be orthogonal only up to $O(\d)$. We will therefore allow the probe states to be perturbed from $\ket{0}$ in our analysis. This causes an adjustment in our expression for the weak measurement step operators, parametrized by two functions $c(x)$ and $\psi(x)$,
\begin{eqnarray*}
	\ket{\s(x)} & = & \cos \lp \d c(x) \rp \ket{0} + \sin \lp \d c(x) \rp e^{i \psi (x)} \ket{1} \\
	& \approx & \ket{0} + \d c(x) e^{i \psi(x)} \ket{1} \\
	& = & \ket{0} + \d \ket{\Delta (x)},
\end{eqnarray*}
where in the last line we've implicitly defined $\ket{\Delta (x)} = c(x) e^{i \psi(x)} \ket{1}$. Equivalently, this contributes a term of $O(\d)$ to our step operators
\begin{equation}
	M_{\pm}(x) = \bra{\pm} e^{i \d H'_{PS}(x)} \ket{0} + \d \bra{\pm} e^{i \d H'_{PS}(x)} \ket{\Delta(x)} .
\end{equation}
Grouping together $O(1)$, $O(\d)$ and $O(\d^2)$ terms in the above expression yields
\begin{eqnarray*}
	\lefteqn{M_{\pm}(x) = \bra{\pm} \Big( \ket{0} + \d \ket{\Delta(x)} \Big) \cdot \ident} \\
	& & + i \d  \bra{\pm} H'_{PS}(x) \Big( \ket{0} + \d \ket{\Delta(x)} \Big) - \frac{\d^2}{2} \bra{\pm} H'_{PS}(x)^2 \ket{0} \\
	& = & \frac{\ident}{\sqrt{2}} + \d \lp i \bra{\pm}H'_{PS}(x) \ket{0} \pm \frac{c(x) e^{i \psi(x)} \ident}{\sqrt{2}} \rp \\
	& &  - \frac{\d^2}{2} \Big( \bra{\pm} H'_{PS}(x)^2 \ket{0} - 2i \bra{\pm}H'_{PS}(x) \ket{\Delta(x)} \Big) \\
	& = & \frac{\ident}{\sqrt{2}} + \d M^{(1)}_{\pm}(x) - \frac{\d^2}{2} M^{(2)}_{\pm}(x),
\end{eqnarray*}
where we've implicitly defined $M^{(1)}_{\pm}(x)$ and $M^{(2)}_{\pm}(x)$ to collect the $O(\d)$ and $O(\d^2)$ terms. We can now write the reversibility condition in terms of the above:
\begin{eqnarray}
	\lefteqn{M_{\mp}(x \pm \d) M_{\pm}(x)} \nonumber \\
	& = & \lp \frac{\ident}{\sqrt{2}} + \d M^{(1)}_{\mp}(x \pm \d) - \frac{\d^2}{2} M^{(2)}_{\mp}(x \pm \d) \rp \nonumber \\
	& & \cdot \lp \frac{\ident}{\sqrt{2}} + \d M^{(1)}_{\pm}(x) - \frac{\d^2}{2} M^{(2)}_{\pm}(x) \rp \nonumber \\
	& = & \frac{\ident}{2} + \frac{\d}{\sqrt{2}} \lp M^{(1)}_{\mp}+ M^{(1)}_{\pm} \rp \label{eqn:O2toappendix} \\
	& &  - \frac{\d^2}{2\sqrt{2}} \lp \mp 2 \partial_x M^{(1)}_{\mp}+ M^{(2)}_{\mp} + M^{(2)}_{\pm} - 2\sqrt{2}M^{(1)}_{\mp}M^{(1)}_{\pm} \rp, \nonumber
\end{eqnarray}
where we've dropped $x$-dependence in the last line for legibility. First, for the $O(\d)$ term we find
\[ M^{(1)}_{\mp}+ M^{(1)}_{\pm} = i \sqrt{2} \langle H'_{PS} \rangle_0 = i \sqrt{2} H_3. \]
We provide the calculations for the following $O(\d^2)$ terms in Appendix~\ref{sec:miscellaneouscalculations}:
\begin{eqnarray}
	\lefteqn{M^{(2)}_{\mp}+ M^{(2)}_{\pm} - 2\sqrt{2}M^{(1)}_{\mp}M^{(1)}_{\pm}} \nonumber \\
	& = & \sqrt{2} \lp 2 (H_2^2 +H_3^2) + i \ls H_1 \pm H_3, H_2 \rs \right. \nonumber \\
	& &  \left. - i \lb H_1, H_2 \rb \pm \ls H_3, H_1 \rs + 4ce^{i \psi} H_2 \rp . \label{eqn:d2terms}
\end{eqnarray}
Expanding the $\partial_x M^{(1)}_{\mp}$ term is a bit more complicated since it corresponds to an infinitesimal rotation of the probe basis at each value of $x$. We can define an axis of rotation on the Bloch sphere with three components $\vec{\O}(x) = \ls \O_1(x)\ \; , \; \O_2(x) \; , \; \O_3(x) \rs$, so that $\partial_x \vec{H}'(x) = \vec{\O}(x) \times \vec{H}'(x)$. This implies that
 \[ \mp 2 \partial_x M^{(1)}_{\mp} = \sqrt{2} \ls i, -1, \mp i \rs \cdot \lp \vec{\O} \times \vec{H}' \rp, \]
where we've ignored any term proportional to the identity operator as these automatically satisfy the reversibility condition. Altogether these reductions yield the expression of interest for the reversibility condition,
\begin{eqnarray*}
    \lefteqn{M_{\mp}(x \pm \d) M_{\pm}(x)} \\
    & = & \frac{\ident}{2} + i \d H_3 - \frac{\d^2}{2} \Big\{ \ls i, -1, \mp i \rs \cdot \lp \vec{\O} \times \vec{H}' \rp \\
    & & + 2 (H_2^2 +H_3^2) + i \ls H_1 \pm H_3, H_2 \rs  \\
    & & - i \lb H_1, H_2 \rb \pm \ls H_3, H_1 \rs + 4ce^{i \psi} H_2 \Big\} .
\end{eqnarray*}

Finally, we group terms into four types: constant-Hermitian $A$, stochastic-Hermitian $B$ (that is, with a factor of $\pm$), constant-anti-Hermitian $i\bar{A}$, and stochastic-anti-Hermitian $i\bar{B}$:
\begin{equation}
	M_{\mp}(x \pm \d) M_{\pm}(x) = \frac{\ident}{2} + i \d H_3 - \frac{\d^2}{2} \lb A \pm B + i\bar{A} \pm i \bar{B} \rb, \label{eqn:revers}
\end{equation}
where
\begin{eqnarray*}
    A & = & - [ \vec{\O} \times \vec{H}' ]_2 + 4c \cos \psi H_2 + 2H_2^2 + 2H_3^2 + i \ls H_1, H_2 \rs, \\
    B & = & - i \ls H_2, H_3 \rs, \\
    \bar{A} & = & [ \vec{\O} \times \vec{H}' ]_1 + 4c \sin \psi H_2- \lb H_1, H_2 \rb, \\
    \bar{B} & = & -[ \vec{\O} \times \vec{H}' ]_3 - i\ls H_1(x), H_3(x) \rs.
\end{eqnarray*}
If the reversibility condition is to be satisfied then these, along with the $O(\d)$ term, must each be individually proportional to $\ident$. This can be done either by restrictions on the Hamiltonian terms, or by canceling the terms through unitary pulses applied when the random walk changes direction.

To eliminate the $H_3$ term, we must either set $H_3 \propto \ident$, or perform a weak unitary pulse of the form $U_1 = \exp \lp 2i\d H_3\rp$. As it happens, setting $H_3 \propto \ident$ does not change the analysis that follows and thus we'll assume instead that the experimentalist performs the pulse $U_1$ at each reversal of the walk direction.

Next, we assume that both the $\bar{A}$ and $\bar{B}$ term can be eliminated via a series of weak unitary pulses of the form $U_2 = \exp \lp i \d^2 H \rp$ where $H$ is some Hermitian operator containing linear combinations and products of $H_X$, $H_Y$, and $H_Z$. This leaves only $A$ and $B$ terms:
\begin{eqnarray}
    A & = & - [ \vec{\O} \times \vec{H}' ]_2 + 4c \cos \psi H_2 + 2H_2^2 + i \ls H_1, H_2 \rs \label{eqn:A} \\
    B & = & - i \ls H_2, H_3 \rs \label{eqn:B} .
\end{eqnarray}
Since $B$ is traceless it cannot be proportional to the identity and must be set to $0$. However, $A$ is not traceless and we must consider a more complicated solution, one where $A$ is equal to $\a \ident$  for some constant $\a$:
\[ \O_3H_1 - \O_1H_3 + 4c \cos \psi H_2 + 2H_2^2 + i \ls H_1, H_2 \rs = \a \ident . \]
Using lemma~\ref{lem:commutatoridentity} in the appendix we find that 
\[ \O_3H_1 - \O_1H_3 + 4c \cos \psi H_2 + 2H_2^2 = \a \ident \]
also implies that $\ls H_1, H_2 \rs = 0$. Together with the commutation relation from $B$, this means that we can now express all Hamiltonian terms in one common diagonal basis:
\begin{equation}
	\label{eqn:allHPS}
	H_{PS} = \sum_j \lp Xx_j(x) + Yy_j(x)+ Zz_j(x) \rp \ketbra{j(x)} .
\end{equation}
In order to satisfy the condition $A \propto \ident$, we must consider the diagonal components of $H_2$ as they appear in equation Eq.~\eqref{eqn:A}:
\begin{equation}
	\label{eqn:riccati}
	\partial_x y_j(x)  = q_0(x) + q_1(x) y_j(x) +  q_2(x) y_j^2(x),
\end{equation}
where $q_0(x)=\a$ from above, $q_1(x) = -4 c(x) \cos \psi(x)$, and $q_2(x) = 2$. This differential equation is a special instance of the Riccati equation, the solution to which can be found in~\cite{murphyODEs}. In particular, if any solution, $y^{(1)}(x)$, is known, then the general solution is of the form
\[ y_j(x) = y^{(1)}(x) + \frac{\Phi(x)}{C_j - \int q_0(x) \Phi(x) dx} .\]
for $\Phi(x) = \exp \int 2q_0(x)y^{(1)}(x) + q_1(x) dx$. The important feature of this solution is that there is only one free parameter $C_j$ available to match any boundary condition. 

To complete the proof, we focus on which instantaneous measurements $M_1$, $M_2$ are achievable at the end points of a continuous decomposition. First, note that all terms in Eq.~\eqref{eqn:allHPS}, including the diagonal basis $\ketbra{j(x)}$, are assumed to be $x$-dependent. Consider the unitary $U(x)$ which diagonalizes $H_1(x)$, $H_2(x)$, and $H_3(x)$. Each of these is a linear combination of $H_X$, $H_Y$, and $H_Z$, and since they are all linearly independent, $U(x)$ must also diagonalize $H_1$, $H_2$, and $H_3$. Whatever unitary does \emph{this}, however, cannot depend on $x$, and therefore, the basis $\ketbra{j}$ is \emph{not} $x$-dependent. Only the $y_j(x)$, $x_j(x)$, and $z_j(x)$ coefficients depend on $x$. This means that every step operator is diagonal in the same basis, and we can write the general form
\begin{eqnarray*}
	M_{\pm}(x) & = & \frac{\ident}{\sqrt{2}} \mp \frac{\d}{\sqrt{2}} \sum_j \lp y_j (x) - c(x) \cos \psi(x) \rp \ketbra{j}\\
	& &  + \frac{\d}{\sqrt{2}} \sum_j i \lp z_j(x) \pm x_j(x) \pm c(x) \sin \psi(x) \rp \ketbra{j}.
\end{eqnarray*}
Thus the endpoint measurement operators must also be diagonal in the $j$ basis, and the first of these has the form
\begin{eqnarray*}
	M_1 & \propto & \underset{\d \rightarrow \infty}{\lim} \prod_{j=0}^{\lfloor X/\d \rfloor} M_+ (j \d) \\
	& \propto & \underset{\d \rightarrow \infty}{\lim} \prod_{j=0}^{\lfloor X/\d \rfloor} \diag{1 - \d y_j(x) + i \d \lp z_j(x) + x_j(x) \rp} \\
	& = & \diag{\exp \lp \int_0^X -y_j(x) + i \lp z_j(x) + x_j(x) \rp dx \rp}.
\end{eqnarray*}
where the notation $\diag{\cdot}$ represents a diagonal matrix with entries indexed by $j$. Both $x_j(x)$ and $z_j(x)$ only contribute a total phase to each of the diagonal elements. If we let $w_j^{(1)} = \int_0^X z_j(x) + x_j(x) dx$ and $W_1 = \diag{\exp i w_j^{(1)}}$ then
\[ M_1 \propto W_1 \cdot \diag{\exp \lp - \int_0^X y_j(x) dx \rp} .\]
Following a similar procedure, we find that 
\[ M_2 \propto W_2 \cdot \diag{\exp \lp \int_{-X}^0 y_j(x) dx \rp} \]
with $W_2$ defined accordingly. Recall however, that these must form a complete measurement, and so they must satisfy $M_1^{\dag}M_1 + M_2^{\dag}M_2 = \ident$. This condition restricts the parameter $C_j$. Consider the $j^{\text{th}}$ diagonal entry of $M_1^{\dag}M_1$ (up to an overall normalization identical for all $j$),
\begin{eqnarray*}
	\lefteqn{\exp \lp - 2 Y(X) - 2 \int_0^X \frac{\Phi(x)}{C_j - \int q_0(x) \Phi(x) dx} dx \rp} \\
	& = & \exp \lp - 2 Y(X) - 2 \int_{\Phi^{-1}(0)}^{\Phi^{-1}(X)} \frac{d \Phi}{C_j - \int q_0(\Phi^{-1}(x)) d\Phi} \rp \\
	& = & e^{-2Y(X)} \lp \frac{C_j - Q(0)}{C_j - Q(X)} \rp^2,
\end{eqnarray*}
where we've implicitly defined
\[ Y(X) = \int_0^X y^{(1)}(x) dx \hspace{0.25in} Q(X) = \int q_0(x) \Phi(x) dx \]
It is not difficult to see, then, that the expression for each diagonal element is quadratic in $C_j$ and has only two solutions. In fact, even if one were to consider two arbitrary boundaries for the random walk, i.e.: that $-X$ be replaced with $X_1$ and $+X$ be replaced with $X_2$, then the expression for $C_j$ would still be quadratic and, again, yield only two possible solutions.

Finally, we can group the diagonal elements of $M_1$ and $M_2$ in terms of the two possible values of $C_j$ and express the entire measurement operators as the linear combination of two orthogonal projectors. In order to complete the proof of Theorem~\ref{thm:MAIN}, we must justify the appearance of unitaries $U_1$, $U_2$ and $V$. First, the rotation $V$ is simply any rotation that an experimentalist applies to the system $S$ before beginning the measurement procedure. For this reason, it does not depend on the measurement outcome. On the other hand, $U_1$ and $U_2$ are rotations applied to the system \emph{after} the continuous measurement procedure is complete and do not have to be identical. We can absorb the diagonal unitary matrices $W_1$ and $W_2$, which accumulate over the continuous procedure, into the definitions of $U_1$ and $U_2$ respectively.

\subsection{\label{sec:pulseconstraints}Constraints on weak unitary pulses}
In the section above we made repeated use of weak unitary pulses to reduce the equations for reversibility. However, some of these were generated by Hamiltonians with products of $H_1$, $H_2$, and $H_3$. Since we aim for our result to apply for a general but fixed interaction Hamiltonian $H_{PS}$ or, equivalently, three system Hamiltonian terms $H_X$, $H_Y$, $H_Z$, it may be too demanding to assume that a set of unitary pulses generated by their products would also be readily available. For this reason, we now consider satisfying the reversibility condition again, but only allowing weak unitary pulses generated by linear combinations of $H_X$, $H_Y$, and $H_Z$.

We examine this more restricted set of solutions to the reversibility condition by reintroducing the constraints on $H_1$, $H_2$, and $H_3$ that we removed in the previous section by using a weak unitary pulse generated by their products. In particular, we reintroduce the conditions on $\bar{A}$ and $\bar{B}$ that required
\[ \lb H_1, H_2 \rb \propto \ident, \hspace{0.5in} \text{and} \hspace{0.5in} \ls H_1, H_3 \rs \propto \ident. \]
The second of these is already automatically satisfied. The first yields the following relationship between the eigenvalues of $H_1$ and $H_2$:
\begin{equation}
	x_j(x)y_j(x) = \g(x) \hspace{0.15in} \forall j,x,
	\label{eqn:linearpulseconstraint}
\end{equation}
for some $\g(x)$ independent of $j$.

So far our result has only placed a restriction on the number of distinct singular values that the measurement can have. Here we'll actually be able to prove something about the singular values in the interaction Hamiltonian. We define $\vec{\lambda}_j = (\lambda_j^{(x)}, \lambda_j^{(y)}, \lambda_j^{(z)})$ as the triplet of the $j^{th}$ eigenvalues of $H_X$, $H_Y$, and $H_Z$, and $\vec{\lambda}'_j(x) = \lp x_j(x), y_j(x), z_j(x) \rp$ as the triplet of the $j^{th}$ eigenvalues of $H_1(x)$, $H_2(x)$, and $H_3(x)$. In the solution from the previous section, each $\vec{\lambda}'_j(x)$ corresponds to some $\vec{\lambda}_j$ via a rotation. This rotation takes each triplet from the original basis, to the probe basis in the same way that $\ket{\s(x)}$ and $\bra{\Psi^{\pm}(x)}$ were rotated from the original basis to $\ket{0}$ and $\bra{\pm}$.

Since we found only two solutions for $y_j(x)$, we must also restrict $\vec{\lambda}_j$ to lie in one of two planes in the original basis. Furthermore, the restriction in Eq.~\eqref{eqn:linearpulseconstraint} requires that the vectors $\vec{\lambda}'_j(x)$ be constrained to lie on one of two lines parellel to $z$ in the probe basis. In the original basis, this restricts all vectors $\vec{\lambda}_j$ to lie on one of two parallel lines, $l_1$ and $l_2$.

If we have only two assignments for $\vec{\lambda}_j$ then any probe basis is possible so long as the resulting $y_j(x)$ match those of the solution to the Ricatti equation above. However, with three or more assignments, we can only allow probe bases related by a rotation around the axis parallel to the lines $l_1$,$l_2$. The constraint Eq.~\eqref{eqn:linearpulseconstraint} limits this even further and we require that $y_1(x) = - y_2(x)$, meaning also that $\vec{\lambda}_j = c_j \vec{\lambda}_0 \pm \vec{\lambda}_1$ for some constant $c_j$, $\vec{\lambda}_0$ parallel to the lines $l_1$, $l_2$, and arbitrary $\vec{\lambda}_1$. This last expression for $\vec{\lambda}_j$ completely restricts the interaction Hamiltonian one should use if only pulses generated by linear combination of Hamiltonian terms are available.

\section{\label{sec:unitaries}Interleaving unitaries}
In~\cite{weakuniversal} the authors generalize their result for positive measurement operators to general measurement operators by taking the polar decomposition of the endpoint measurement operators. In other words, $M_1 = V_1 ( M_1^{\dag}M_1)^{1/2}$, and similarly for $M_2$. They then construct a one-parameter family of unitary operations $\{ V(x) \}_x$ that yield $V_1$ at $x=X$, $V_2$ at $x=-X$ and $\ident$ at $x=0$. The step operators are first constructed so as to correspond to the positive operators $(M_{1,2}^{\dag}M_{1,2})^{1/2}$ and padded by unitary operators chosen from the family $V(x)$ as follows
\begin{equation}
	M_{\pm}(x) = V(x \pm \d) \tilde{M}_{\pm}(x) V^{\dag}(x)
	\label{eqn:stepwithpadding}
\end{equation}
where $\tilde{M}_{\pm}(x)$ is the step operator for the positive part of the polar decomposition. We will show that in our analysis, padding the step operator with this family of unitary operators is equivalent to a shift in the $H_1(x)$ term.
Consider expanding the $\tilde{M}_{\pm}(x)$ term in Eq.~\eqref{eqn:stepwithpadding} in terms of $\d$,
\begin{eqnarray*}
	M_{\pm}(x) & = & \frac{1}{\sqrt{2}} V(x \pm \d)V^{\dag}(x) \\
	& & + i \d V(x \pm \d) \bra{\pm} H'_{PS}(x) \ket{0} V^{\dag}(x) .
\end{eqnarray*}
Recall that $V(x)$ forms a \emph{continuous} family of unitary operators, and if we let $V(x) = e^{i \d G(x)}$, then 
\[ V(x \pm \d) V^{\dag}(x) = \ident \pm i \d \partial_x G(x) .\]
This means we can summarize Eq.~\eqref{eqn:stepwithpadding} as
\[ M_{\pm}(x) = \frac{\ident}{\sqrt{2}} +  \frac{\d}{\sqrt{2}} \lb \pm i \lp \tilde{H}_1+\partial_x G \rp \mp \tilde{H}_2 + i \tilde{H}_3 \rb \]
where $\tilde{H}_{1,2,3} = V H_{1,2,3} V^{\dag}$. Thus, for this scheme we still recover the result of Thm.~\ref{thm:MAIN} but with a small modification. Namely, an experimentalist now has the power to introduce a shift to the $H_1$ term which contributes directly to the unitary terms $U_{1,2}$ that appear in the endpoint measurements $M_{1,2}$. It is important to note, however, that we've not affected the singular value decomposition of $M_{1,2}$, for which there are still only two distinct singular values.

\section{\label{sec:qubit}General diagonal measurement of a qubit}
While the results of this paper restrict the class of measurements that can be achieved in general by this model, it is sufficient to realize any $2$-outcome measurement on a qubit. We now consider performing a generalized diagonal measurement on a qubit via a continuous decomposition (Def.~\ref{def:continuousdecomposition}). A generalized diagonal measurement takes the form
\[ M_1 = W_1 \cdot \ls \begin{array}{cc} \a&0 \\ 0&\b \end{array} \rs,  \hspace{0.25in} M_2 = W_2 \cdot \ls \begin{array}{cc} \sqrt{1-\a^2} & 0 \\ 0 & \sqrt{1-\b^2} \end{array} \rs \]
where $W_1$ and $W_2$ are unitary matrices. We will effectuate the continuous decomposition via a sequence of probe feedback loops (Def.~\ref{def:probefeedbackloop}) and the interaction Hamiltonian $Z_P \ox Z_S$. Expressing the interaction Hamiltonian in the probe basis yields
\[ H'_{PS}(x) = X \ox \lp n_3^x(x) Z \rp + Y \ox \lp n_3^y(x) Z \rp + Z \ox \lp n_3^z(x) Z \rp\]
Thus $H_2(x) = n_3^y(x) Z$ and its diagonal values are $y_1(x) = n_3^y(x)$, $y_2(x) = - n_3^y (x)$. In this case, if we return to the Riccati equation~\eqref{eqn:riccati}, we see that we must choose values of $q_0 (x)$ and $q_1(x)$ such that both $n_3^y(x)$ and $-n_3^y(x)$ are solutions. If we add and subtract the Riccati equations for the positive and negative solutions, we get instead the following two equations:
\[ 2 \lp n_3^y (x) \rp^2 = q_0 (x), \hspace{0.25in} \text{and} \hspace{0.25in} \partial_x n_3^y(x) = q_1(x) n_3^y(x), \]
and we see that our solution must be
\[ n_3^y(x) = \exp \lp \int q_1(x) dx \rp, \]
with $q_0(x) = 2 \exp \lp 2 \int q_1(x) dx \rp$. The step operators take the form
\begin{eqnarray*}
	M_{\pm}(x) & = & \frac{\ident}{\sqrt{2}} \mp \frac{\d}{\sqrt{2}} \lp n_3^y(x) Z - c(x) \cos \psi(x) \ident \rp \\
	& & + \frac{i \d}{\sqrt{2}} \lp \lp \pm n_3^x(x) + n_3^z(x) \rp Z \pm c(x) \sin \psi(x) \ident \rp
\end{eqnarray*}
where, as we've defined before, $c(x) e^{ i \psi(x)}$ is the warping of the probe basis, and $q_1(x) = - 4 c(x) \cos \psi(x)$. We can simplify this operator by choosing $\psi(x) = 0$ and $n_3^z(x) = 0$. This forces $n_3^z(x) = \sqrt{1 - \lp n_3^y(x) \rp^2}$. The simplified step operator is
\begin{equation}
	M_{\pm}(x) = \frac{\ident}{\sqrt{2}} \mp \frac{\d}{\sqrt{2}} \lp e^{-4 \int c(x) dx} Z - c(x) \ident \rp \pm \frac{i \d n_3^x(x) Z}{\sqrt{2}}.
\end{equation}
This gives the first endpoint operator the following form
\begin{eqnarray*}
	M_1 & \propto & W_1 \cdot \text{diag} \lb \exp \lp \int_0^X \lp -e^{-4 \int c(x) dx} + c(x) \rp dx \rp, \right. \\
	& & \hspace{0.69in}  \left. \exp \lp \int_0^X \lp e^{-4 \int c(x) dx} + c(x) \rp dx \rp \rb,
\end{eqnarray*}
and the second, the form
\begin{eqnarray*}
	M_2 & \propto & W_2 \cdot \text{diag} \lb \exp \lp \int_{-X}^0 \lp e^{-4 \int c(x) dx} - c(x) \rp dx \rp, \right. \\
	& & \hspace{0.69in} \left. \exp \lp \int_{-X}^0 \lp -e^{-4 \int c(x) dx} - c(x) \rp dx \rp \rb .
\end{eqnarray*}
If we choose the probe basis warping $c(x)$ to be
\[ c(x) = \frac{1}{2} \lp \tanh \lp x-a\rp + \tanh \lp x-b \rp \rp, \]
then we recover the endpoint operators
\[ M_1 \propto W_1 \cdot \diag{e^{\int_0^X \tanh(x-a) dx}, e^{\int_0^X \tanh(x-b) dx} }, \]
\[ M_2 \propto W_2 \cdot \diag{e^{- \int_{-X}^0 \tanh(x-a) dx}, e^{- \int_{-X}^0 \tanh(x-b) dx} }, \]
where $W_1 = e^{i \theta Z}$ and $W_2 = e^{-i \theta Z}$ for some value $\theta$ resulting from the integration of $n_3^x(x)$. An appropriate choice of $a$ and $b$ will yield the desired generalized diagonal measurement:
\[a = \ln \sqrt{\frac{\tanh X + (2 \a - 1)}{\tanh X - (2 \a -1)}}, \hspace{0.15in} b = \ln \sqrt{\frac{\tanh X + (2 \b - 1)}{\tanh X - (2 \b -1)}}. \]
The following choice of probe basis corresponds to the values of $\vec{n}_3$ described above:
\[ \ls \begin{array}{c|c|c} & & \\ \vec{n}_1 & \vec{n}_2 & \vec{n}_3 \\ & & \end{array} \rs = \ls \begin{array}{ccc} 0&-c(x)&\sqrt{1-c(x)^2} \\ 0 & \sqrt{1-c(x)^2}& c(x) \\ 1&0&0 \end{array} \rs \]
Figure~\ref{fig:ZZ} shows a simulation of this scheme for $\a = 0.8$, $\b=0.2$ where the initial state of the system qubit is $\ket{\Psi}=\ket{+}$.
\begin{figure}
	\begin{subfigure}[b]{\linewidth}
		\fbox{\includegraphics[scale=0.9, clip=true, trim= 175 270 175 310]{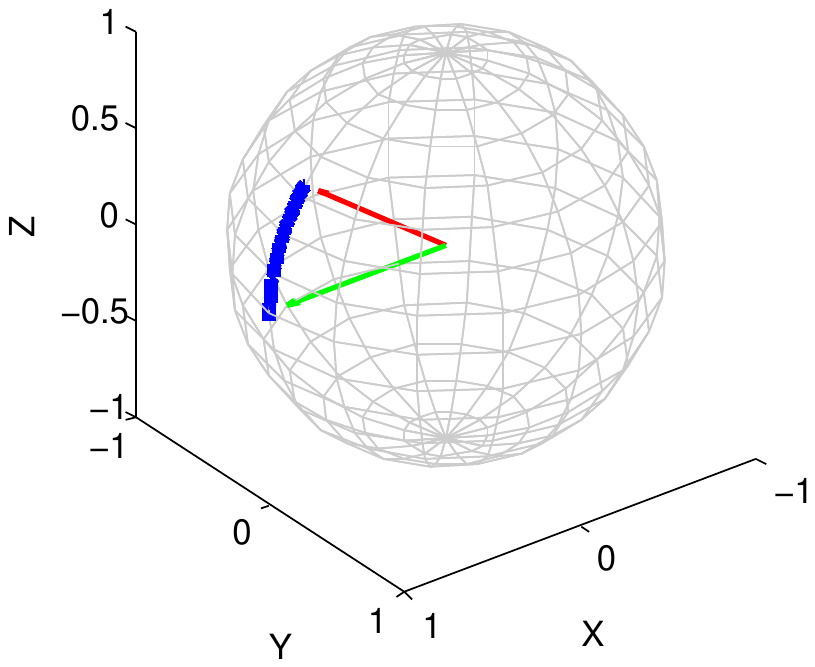}}
		\caption{}
	\end{subfigure}
	
	\begin{subfigure}[b]{0.49\linewidth}
		\fbox{\includegraphics[scale=0.45, clip=true, trim= 170 265 195 290]{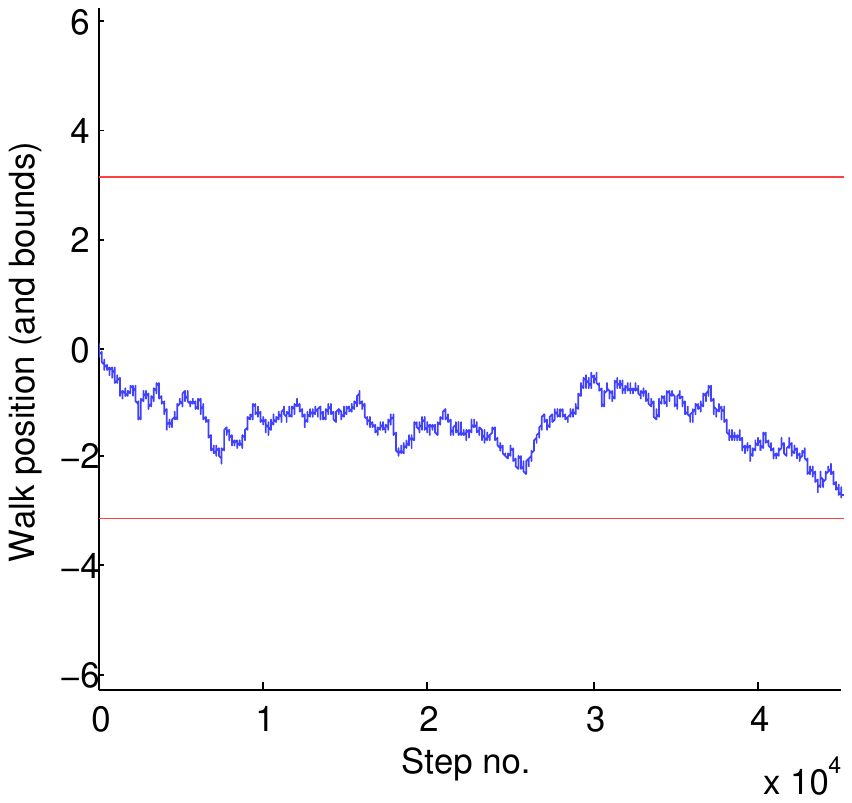}}
		\caption{}
	\end{subfigure}
	\begin{subfigure}[b]{0.49\linewidth}
		\fbox{\includegraphics[scale=0.45, clip=true, trim= 170 265 195 290]{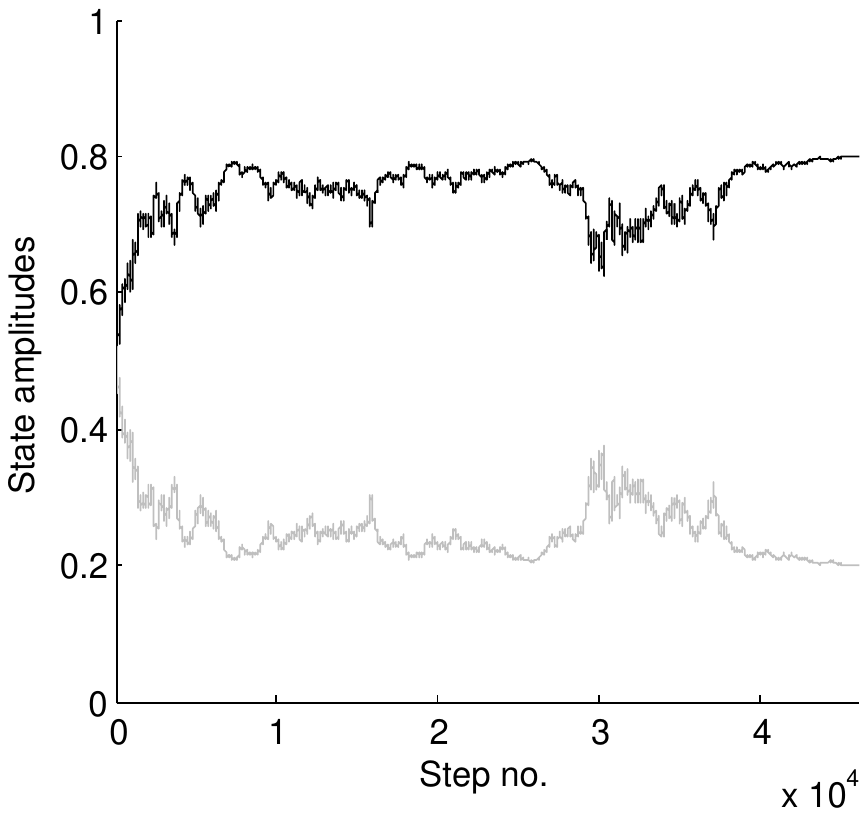}}
		\caption{}
	\end{subfigure}
	\caption{\emph{(color online)} (a) At the beginning of the process the system qubit is in the $\ket{+}$ state, indicated by the green vector. The continuous measurement procedure causes the state to walk along the blue curve on the surface of the sphere, sometimes reversing direction and doubling back along it. At the end of the process, the system qubit reaches the state $M_2 \ket{+} / p_2$. (b) The random walk undertaken by the pointer variable $x$, illustrated by the blue line, ends when the value of $x$ reaches either of the boundaries illustrated by the red lines. (c) The amplitudes of the state evolve towards their post-measurement values. }
	\label{fig:ZZ}
\end{figure}

\section{\label{sec:conclusions}Conclusions}
In this work, we've shown that a continuous decomposition of a two-outcome quantum measurement using a probe qubit and a constant interaction Hamiltonian can only yield measurements with two distinct singular values. In the qubit-to-qubit scheme of section~\ref{sec:qubit} this corresponds to a biased diagonal measurement of the system qubit. Of course, if we consider the recipe outlined in~\cite{weakuniversal} and use our decomposition of two-outcome measurements in sequence to give rise to an $n$-outcome measurement, then this larger measurement procedure can contain operators with $n$ distinct singular values.

The restriction to two singular values is a direct consequence of the reversibility condition (Eq.~\eqref{eqn:reversibility}). This condition, however, is a critical piece of the construction as it guarantees that the desired ``strong'' quantum measurement is faithfully produced at the endpoints. Without the reversibility condition, the continuous measurement procedure is not guaranteed to halt.

In some sense, our scheme is a restricted version of the large-ancilla continuous measurements in~\cite{nonlocality}. Although high-dimensional unitary rotations are not a limiting requirement for quantum computers, for individual quantum systems they can still be restrictive. Our scheme reduces the requirement on the number of probes that need to be simultaneously entangled with the system to one probe for a brief interaction time. It also characterizes the possible measurements in terms of the interaction Hamiltonian of the quantum system. The scheme presented here could be of use for generalized measurements in some types of qubits, such as superconducting qubits~\cite{superconductingnotes}. 

Although we've only analyzed \emph{probe} feedback here, we can also analyze \emph{Hamiltonian} feedback, where control parameters of the Hamiltonian become functions of the pointer variable $x$. In this case, the reversibility condition restricts not only the probes and detector states but also the values assigned to the controls in the interaction Hamiltonian. A detailed analysis of this scheme is forthcoming.

Finally, one could also extend this continuous feedback scheme beyond qubit probes to qudit probes or even continuous probe states. One natural reason for doing so would be to embed the control parameters of a Hamiltonian, as described above, into free parameters of the probe state. Alternatively, schemes such as~\cite{generalizedstochastic} provide a framework for decomposing $n$-outcome measurements into a single stochastic process and this framework could be extended to the interacting probe case using higher-dimensional probes.

\begin{acknowledgments}
JF and TAB thank Daniel Lidar and Ognyan Oreshkov for useful discussions.  This research was supported in part by the ARO MURI under Grant No. W911NF-11-1-0268.
\end{acknowledgments}

\bibliographystyle{apsrev4-1}
\bibliography{QuantumControl}

\begin{thebibliography}{11}%
\makeatletter
\providecommand \@ifxundefined [1]{%
 \@ifx{#1\undefined}
}%
\providecommand \@ifnum [1]{%
 \ifnum #1\expandafter \@firstoftwo
 \else \expandafter \@secondoftwo
 \fi
}%
\providecommand \@ifx [1]{%
 \ifx #1\expandafter \@firstoftwo
 \else \expandafter \@secondoftwo
 \fi
}%
\providecommand \natexlab [1]{#1}%
\providecommand \enquote  [1]{``#1''}%
\providecommand \bibnamefont  [1]{#1}%
\providecommand \bibfnamefont [1]{#1}%
\providecommand \citenamefont [1]{#1}%
\providecommand \href@noop [0]{\@secondoftwo}%
\providecommand \href [0]{\begingroup \@sanitize@url \@href}%
\providecommand \@href[1]{\@@startlink{#1}\@@href}%
\providecommand \@@href[1]{\endgroup#1\@@endlink}%
\providecommand \@sanitize@url [0]{\catcode `\\12\catcode `\$12\catcode
  `\&12\catcode `\#12\catcode `\^12\catcode `\_12\catcode `\%12\relax}%
\providecommand \@@startlink[1]{}%
\providecommand \@@endlink[0]{}%
\providecommand \url  [0]{\begingroup\@sanitize@url \@url }%
\providecommand \@url [1]{\endgroup\@href {#1}{\urlprefix }}%
\providecommand \urlprefix  [0]{URL }%
\providecommand \Eprint [0]{\href }%
\providecommand \doibase [0]{http://dx.doi.org/}%
\providecommand \selectlanguage [0]{\@gobble}%
\providecommand \bibinfo  [0]{\@secondoftwo}%
\providecommand \bibfield  [0]{\@secondoftwo}%
\providecommand \translation [1]{[#1]}%
\providecommand \BibitemOpen [0]{}%
\providecommand \bibitemStop [0]{}%
\providecommand \bibitemNoStop [0]{.\EOS\space}%
\providecommand \EOS [0]{\spacefactor3000\relax}%
\providecommand \BibitemShut  [1]{\csname bibitem#1\endcsname}%
\let\auto@bib@innerbib\@empty
\bibitem [{\citenamefont {{Bennett}}\ \emph {et~al.}(1999)\citenamefont
  {{Bennett}}, \citenamefont {{Divincenzo}}, \citenamefont {{Fuchs}},
  \citenamefont {{Mor}}, \citenamefont {{Rains}}, \citenamefont {{Shor}},
  \citenamefont {{Smolin}},\ and\ \citenamefont {{Wootters}}}]{nonlocality}%
  \BibitemOpen
  \bibfield  {author} {\bibinfo {author} {\bibfnamefont {C.~H.}\ \bibnamefont
  {{Bennett}}}, \bibinfo {author} {\bibfnamefont {D.~P.}\ \bibnamefont
  {{Divincenzo}}}, \bibinfo {author} {\bibfnamefont {C.~A.}\ \bibnamefont
  {{Fuchs}}}, \bibinfo {author} {\bibfnamefont {T.}~\bibnamefont {{Mor}}},
  \bibinfo {author} {\bibfnamefont {E.}~\bibnamefont {{Rains}}}, \bibinfo
  {author} {\bibfnamefont {P.~W.}\ \bibnamefont {{Shor}}}, \bibinfo {author}
  {\bibfnamefont {J.~A.}\ \bibnamefont {{Smolin}}}, \ and\ \bibinfo {author}
  {\bibfnamefont {W.~K.}\ \bibnamefont {{Wootters}}},\ }\href {\doibase
  10.1103/PhysRevA.59.1070} {\bibfield  {journal} {\bibinfo  {journal} {\pra}\
  }\textbf {\bibinfo {volume} {59}},\ \bibinfo {pages} {1070} (\bibinfo {year}
  {1999})},\ \Eprint {http://arxiv.org/abs/quant-ph/9804053} {quant-ph/9804053}
  \BibitemShut {NoStop}%
\bibitem [{\citenamefont {Oreshkov}\ and\ \citenamefont
  {Brun}(2005)}]{weakuniversal}%
  \BibitemOpen
  \bibfield  {author} {\bibinfo {author} {\bibfnamefont {O.}~\bibnamefont
  {Oreshkov}}\ and\ \bibinfo {author} {\bibfnamefont {T.}~\bibnamefont
  {Brun}},\ }\href@noop {} {\bibfield  {journal} {\bibinfo  {journal} {Physical
  review letters}\ }\textbf {\bibinfo {volume} {95}},\ \bibinfo {pages}
  {110409} (\bibinfo {year} {2005})}\BibitemShut {NoStop}%
\bibitem [{\citenamefont {Varbanov}\ and\ \citenamefont
  {Brun}(2007)}]{generalizedstochastic}%
  \BibitemOpen
  \bibfield  {author} {\bibinfo {author} {\bibfnamefont {M.}~\bibnamefont
  {Varbanov}}\ and\ \bibinfo {author} {\bibfnamefont {T.~A.}\ \bibnamefont
  {Brun}},\ }\href {\doibase 10.1103/PhysRevA.76.032104} {\bibfield  {journal}
  {\bibinfo  {journal} {Phys. Rev. A}\ }\textbf {\bibinfo {volume} {76}},\
  \bibinfo {pages} {032104} (\bibinfo {year} {2007})}\BibitemShut {NoStop}%
\bibitem [{\citenamefont {{Brun}}(2002)}]{simplemodel}%
  \BibitemOpen
  \bibfield  {author} {\bibinfo {author} {\bibfnamefont {T.~A.}\ \bibnamefont
  {{Brun}}},\ }\href {\doibase 10.1119/1.1475328} {\bibfield  {journal}
  {\bibinfo  {journal} {American Journal of Physics}\ }\textbf {\bibinfo
  {volume} {70}},\ \bibinfo {pages} {719} (\bibinfo {year} {2002})},\ \Eprint
  {http://arxiv.org/abs/quant-ph/0108132} {quant-ph/0108132} \BibitemShut
  {NoStop}%
\bibitem [{\citenamefont {Brune}\ \emph {et~al.}(1990)\citenamefont {Brune},
  \citenamefont {Haroche}, \citenamefont {Lefevre}, \citenamefont {Raimond},\
  and\ \citenamefont {Zagury}}]{rydberg}%
  \BibitemOpen
  \bibfield  {author} {\bibinfo {author} {\bibfnamefont {M.}~\bibnamefont
  {Brune}}, \bibinfo {author} {\bibfnamefont {S.}~\bibnamefont {Haroche}},
  \bibinfo {author} {\bibfnamefont {V.}~\bibnamefont {Lefevre}}, \bibinfo
  {author} {\bibfnamefont {J.~M.}\ \bibnamefont {Raimond}}, \ and\ \bibinfo
  {author} {\bibfnamefont {N.}~\bibnamefont {Zagury}},\ }\href {\doibase
  10.1103/PhysRevLett.65.976} {\bibfield  {journal} {\bibinfo  {journal} {Phys.
  Rev. Lett.}\ }\textbf {\bibinfo {volume} {65}},\ \bibinfo {pages} {976}
  (\bibinfo {year} {1990})}\BibitemShut {NoStop}%
\bibitem [{\citenamefont {Rugar}\ \emph {et~al.}(2004)\citenamefont {Rugar},
  \citenamefont {Budakian}, \citenamefont {Mamin},\ and\ \citenamefont
  {Chui}}]{MRFM}%
  \BibitemOpen
  \bibfield  {author} {\bibinfo {author} {\bibfnamefont {D.}~\bibnamefont
  {Rugar}}, \bibinfo {author} {\bibfnamefont {R.}~\bibnamefont {Budakian}},
  \bibinfo {author} {\bibfnamefont {H.}~\bibnamefont {Mamin}}, \ and\ \bibinfo
  {author} {\bibfnamefont {B.}~\bibnamefont {Chui}},\ }\href@noop {} {\bibfield
   {journal} {\bibinfo  {journal} {Nature}\ }\textbf {\bibinfo {volume}
  {430}},\ \bibinfo {pages} {329} (\bibinfo {year} {2004})}\BibitemShut
  {NoStop}%
\bibitem [{\citenamefont {Dotsenko}\ \emph {et~al.}(2010)\citenamefont
  {Dotsenko}, \citenamefont {Bernu}, \citenamefont {Del\'{e}glise},
  \citenamefont {Sayrin}, \citenamefont {Brune}, \citenamefont {Raimond},
  \citenamefont {Haroche}, \citenamefont {Mirrahimi},\ and\ \citenamefont
  {Rouchon}}]{rydberg-feedback}%
  \BibitemOpen
  \bibfield  {author} {\bibinfo {author} {\bibfnamefont {I.}~\bibnamefont
  {Dotsenko}}, \bibinfo {author} {\bibfnamefont {J.}~\bibnamefont {Bernu}},
  \bibinfo {author} {\bibfnamefont {S.}~\bibnamefont {Del\'{e}glise}}, \bibinfo
  {author} {\bibfnamefont {C.}~\bibnamefont {Sayrin}}, \bibinfo {author}
  {\bibfnamefont {M.}~\bibnamefont {Brune}}, \bibinfo {author} {\bibfnamefont
  {J.~M.}\ \bibnamefont {Raimond}}, \bibinfo {author} {\bibfnamefont
  {S.}~\bibnamefont {Haroche}}, \bibinfo {author} {\bibfnamefont
  {M.}~\bibnamefont {Mirrahimi}}, \ and\ \bibinfo {author} {\bibfnamefont
  {P.}~\bibnamefont {Rouchon}},\ }\href {\doibase
  10.1088/0031-8949/2010/T140/014004} {\bibfield  {journal} {\bibinfo
  {journal} {Physica Scripta}\ }\textbf {\bibinfo {volume} {2010}},\ \bibinfo
  {pages} {014004+} (\bibinfo {year} {2010})}\BibitemShut {NoStop}%
\bibitem [{\citenamefont {Wiseman}(1995)}]{Phase}%
  \BibitemOpen
  \bibfield  {author} {\bibinfo {author} {\bibfnamefont {H.}~\bibnamefont
  {Wiseman}},\ }\href@noop {} {\bibfield  {journal} {\bibinfo  {journal} {Phys.
  Rev. Lett.}\ }\textbf {\bibinfo {volume} {75}},\ \bibinfo {pages} {4587}
  (\bibinfo {year} {1995})}\BibitemShut {NoStop}%
\bibitem [{\citenamefont {Oreshkov}\ and\ \citenamefont
  {Brun}(2006)}]{infinitesimal}%
  \BibitemOpen
  \bibfield  {author} {\bibinfo {author} {\bibfnamefont {O.}~\bibnamefont
  {Oreshkov}}\ and\ \bibinfo {author} {\bibfnamefont {T.~A.}\ \bibnamefont
  {Brun}},\ }\href {\doibase 10.1103/PhysRevA.73.042314} {\bibfield  {journal}
  {\bibinfo  {journal} {Phys. Rev. A}\ }\textbf {\bibinfo {volume} {73}},\
  \bibinfo {pages} {042314} (\bibinfo {year} {2006})}\BibitemShut {NoStop}%
\bibitem [{\citenamefont {Murphy}(1960)}]{murphyODEs}%
  \BibitemOpen
  \bibfield  {author} {\bibinfo {author} {\bibfnamefont {G.~M.}\ \bibnamefont
  {Murphy}},\ }\href@noop {} {\emph {\bibinfo {title} {Ordinary differential
  equations and their solutions}}},\ \bibinfo {edition} {1st}\ ed.\ (\bibinfo
  {publisher} {Van Nostrand},\ \bibinfo {address} {Princeton, N.J.},\ \bibinfo
  {year} {1960})\BibitemShut {NoStop}%
\bibitem [{\citenamefont {{Dressel}}\ \emph {et~al.}(2013)\citenamefont
  {{Dressel}}, \citenamefont {{Brun}},\ and\ \citenamefont
  {{Korotkov}}}]{superconductingnotes}%
  \BibitemOpen
  \bibfield  {author} {\bibinfo {author} {\bibfnamefont {J.}~\bibnamefont
  {{Dressel}}}, \bibinfo {author} {\bibfnamefont {T.~A.}\ \bibnamefont
  {{Brun}}}, \ and\ \bibinfo {author} {\bibfnamefont {A.~N.}\ \bibnamefont
  {{Korotkov}}},\ }\href@noop {} {\bibfield  {journal} {\bibinfo  {journal}
  {ArXiv e-prints}\ } (\bibinfo {year} {2013})},\ \Eprint
  {http://arxiv.org/abs/1312.1319} {arXiv:1312.1319 [quant-ph]} \BibitemShut
  {NoStop}%
\end{thebibliography}%

\appendix
\section{\label{sec:miscellaneouscalculations}Calculation of the $O(\d^2)$ terms of the reversibilty condition}
Starting with Eq.~\eqref{eqn:O2toappendix}, we can write $O(\d^2)$ in three parts, first
\[ M^{(2)}_{\mp}+ M^{(2)}_{\pm} = \sqrt{2} \langle (H'_{PS})^2 \rangle_0 - 2\sqrt{2}cie^{i \psi} \bra{0} H'_{PS} \ket{1}. \]
Next,
\begin{eqnarray*}
	\lefteqn{M^{(1)}_{\mp}M^{(1)}_{\pm} = - \bra{\mp}H'_{PS} \ket{0} \bra{\pm}H'_{PS} \ket{0}} \\
	& & \pm \frac{cie^{i \psi}}{\sqrt{2}} \lp \bra{\pm}H'_{PS} \ket{0} - \bra{\mp}H'_{PS} \ket{0} \rp - \frac{c^2e^{2 i \psi} \ident}{2} \\
	& = & - \bra{\mp}H'_{PS} \ket{0} \bra{\pm}H'_{PS} \ket{0} - cie^{i \psi} \bra{1}H'_{PS} \ket{0} - \frac{c^2e^{2 i \psi} \ident}{2} .
\end{eqnarray*}
We can always discard terms proportional to $\ident$. Grouping together the last two calculations, we get
\begin{eqnarray*}
	\lefteqn{M^{(2)}_{\mp}+ M^{(2)}_{\pm} - 2\sqrt{2}M^{(1)}_{\mp}M^{(1)}_{\pm}} \\ 
	& = & \sqrt{2} \langle (H'_{PS})^2 \rangle_0 - 2\sqrt{2}cie^{i \psi} \lp \bra{0} H'_{PS} \ket{1} - \bra{1} H'_{PS} \ket{0} \rp \\
	& & + 2\sqrt{2} \bra{\mp}H'_{PS} \ket{0} \bra{\pm}H'_{PS} \ket{0} \\
	& = & \sqrt{2} \langle (H'_{PS})^2 \rangle_0 + 4\sqrt{2}ce^{i \psi} H_2 \\
	& & + 2\sqrt{2} \bra{\mp}H'_{PS} \ket{0} \bra{\pm}H'_{PS} \ket{0}.
\end{eqnarray*}
The two terms above, still expressed as functions of $H'_{PS}$, can be expanded as follows. First, note that
\begin{eqnarray*}
	(H'_{PS})^2 & = & \ident \ox \lp H_1^2 + H_2^2 + H_3^2 \rp + X \ox i \ls H_2, H_3 \rs \\
	& & + Y \ox i \ls H_3, H_1 \rs + Z \ox i \ls H_1, H_2 \rs,
\end{eqnarray*}
implying that
\[ \langle (H'_{PS})^2 \rangle_0 = H_1^2 + H_2^2 + H_3^2 + i \ls H_1, H_2 \rs . \]
Next note that
\[ \bra{\pm} H'_{PS} \ket{0} = \frac{1}{\sqrt{2}} \lp \pm H_1 \pm i H_2 + H_3 \rp, \]
which yields
\begin{eqnarray*}
     \lefteqn{\bra{\mp} H'_{PS}\ket{0} \bra{\pm} H'_{PS} \ket{0}} \\
    & = &-\frac{1}{2} \lp H_1 + i H_2 \mp H_3 \rp\lp H_1 + i H_2 \pm H_3 \rp \\
    & = & \frac{1}{2} \lp - \lp H_1 + iH_2 \rp^2 + H_3^2 \pm \ls H_3, H_1 \rs \pm i \ls H_3, H_2 \rs \rp .
\end{eqnarray*}
Putting this all together, we can now write Eq.~\eqref{eqn:d2terms}.

\section{Various lemmas}
\setcounter{lem}{0}
\begin{lem}[Probe basis of a weak measurement] \label{lem:probebasis}
Any diffusive weak measurement given by a probe feedback loop (Def.~\ref{def:probefeedbackloop}) with a probe basis $\{\vec{n}_1(x), \vec{n}_2(x), \vec{n}_3(x)\}$ must have $\vec{n}_2 \cdot \vec{\s} \sim O(\d)$. Thus, an orthonormal basis for the Bloch sphere that approximates the probe and detector always exists.
\end{lem}
\begin{proof}
Recall that for a probe feedback loop we can expand the operator in orders of $\d$,
\begin{eqnarray*}
	M_{\pm}(x) & = & \bra{\Psi^{\pm}} e^{i \d H_{PS}} \ket{\s} \\
	& = & \bra{\Psi^{\pm}} \lp \ident + i \d H_{PS} \rp \ket{\s} \\
	& \approx & \braket{\Psi^{\pm}}{\s} \cdot \ident + O(\d)
\end{eqnarray*}
A diffusive weak measurement must always obtain both results $\pm$ with nearly equal probability (up to $O(\d)$). The probability of each result on a quantum state $\ket{\Phi}$ is
\begin{eqnarray*}
	p_{\pm} & = & \bra{\Phi} M^{\dag}_{\pm}(x) M_{\pm}(x) \ket{\Phi} \\
	& \approx & |\braket{\Psi^{\pm}}{\s} |^2  \cdot \braket{\Phi}{\Phi} + O(\d) \\
	& = & \frac{1}{2} + O(\d)
\end{eqnarray*}
which in turn means that $|\braket{\Psi^{\pm}}{\s} | \approx 1/\sqrt{2} + O(\d)$. In the Bloch vector representation, this implies that $\vec{n}_2 \cdot \vec{\s} \sim O(\d)$.
\end{proof}

\begin{lem}[Commutator identity]
\label{lem:commutatoridentity}
For any Hermitian operators $\hat{O}$ and $\hat{A}$, if $i \ls \hat{O}, \hat{A} \rs \propto \hat{O}$ then $\hat{O}=0$.
\end{lem}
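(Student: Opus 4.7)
The plan is to take diagonal matrix elements of the hypothesis in an eigenbasis of $\hat{O}$. Writing the hypothesis as $i[\hat{O},\hat{A}] = c\hat{O}$ for some scalar $c$, the left-hand side is automatically Hermitian (since $\hat{O}$ and $\hat{A}$ are), so $c$ must be real. I read ``$\propto$'' as asserting the existence of a \emph{nonzero} constant; this is the only nontrivial reading, since any commuting pair with $\hat{O}\neq 0$ satisfies $i[\hat{O},\hat{A}] = 0\cdot \hat{O}$ and would otherwise be a counterexample, and it is also how the lemma is used in the proof of Theorem~\ref{thm:MAIN}.

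Since $\hat{O}$ is Hermitian it admits an orthonormal eigenbasis $\{\ket{v_j}\}$ with real eigenvalues $\{\lambda_j\}$, and in particular $\bra{v_j}\hat{O} = \lambda_j \bra{v_j}$. Sandwiching the identity $c\hat{O} = i[\hat{O},\hat{A}]$ between $\bra{v_j}$ and $\ket{v_j}$ yields
\[ c\lambda_j = i\bra{v_j}\lp \hat{O}\hat{A} - \hat{A}\hat{O} \rp\ket{v_j} = i\lp \lambda_j - \lambda_j \rp \bra{v_j}\hat{A}\ket{v_j} = 0. \]
Because $c\neq 0$, this forces $\lambda_j = 0$ for every $j$, and diagonalizability of $\hat{O}$ then gives $\hat{O} = 0$, as claimed.

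I do not foresee a real obstacle here: the argument is a one-line computation relying on the standard fact that the diagonal matrix elements of a commutator $[\hat{O},\hat{A}]$ vanish in any eigenbasis of $\hat{O}$. The only genuinely delicate point is fixing the convention about the proportionality constant; everything else (Hermiticity of $c$, existence of an eigenbasis, conclusion from vanishing spectrum) is automatic in the finite-dimensional setting in which the lemma is applied.
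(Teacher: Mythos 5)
Your proof is correct, but it takes a genuinely different route from the paper's. The paper diagonalizes $\hat{A}$, writes $\hat{O}$ componentwise in that basis, derives $(a_k-a_j)\,i\,o_{jk}=\alpha\,o_{jk}$, and reaches a contradiction by separating real and imaginary parts of each $o_{jk}$ (an argument that implicitly divides by $a_k-a_j$ and so is cleanest only on the off-diagonal entries with $a_j\neq a_k$). You instead diagonalize $\hat{O}$ and invoke the standard fact that the diagonal matrix elements of $\ls \hat{O},\hat{A}\rs$ vanish in any eigenbasis of $\hat{O}$, giving $c\lambda_j=0$ directly; with $c\neq 0$ and diagonalizability of $\hat{O}$ this kills the operator in one line. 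Your version is shorter, avoids the real/imaginary bookkeeping entirely, and makes explicit where Hermiticity of $\hat{O}$ is actually used (diagonalizability, so that a vanishing spectrum forces $\hat{O}=0$). You are also right to flag the convention on $\propto$: both your proof and the paper's require the proportionality constant to be nonzero, since any commuting pair with $\hat{O}\neq 0$ satisfies the hypothesis with constant zero; the paper leaves this implicit. The one cosmetic remark is that your observation that $c$ must be real, while true, is not needed for the argument --- $c\lambda_j=0$ with $c\neq 0$ already forces $\lambda_j=0$ whether or not $c$ is real.
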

\begin{proof}
First, let us express $\hat{A}$ and $\hat{O}$ in a basis where $\hat{A}$ is diagonal, i.e.: $\hat{A} = \sum_i a_i \ketbra{i}$ with $a_i \in \mbb{R}$ and $\hat{O} = \sum_{jk} o_{jk} \ketb{j}{k}$. This makes our equation
\[ i \sum_{jk} o_{jk} \ketb{j}{k} \hat{A} - i \hat{A} \sum_{jk} o_{jk} \ketb{j}{k} = \a \sum_{jk} o_{jk} \ketb{j}{k} .\]
for some constant $\a$. Expanding the operator $\hat{A}$ yields
\[ i \sum_{ijk} \lp a_io_{jk} \ket{j} \braket{k}{i} \bra{i} - a_i o_{jk} \ket{i} \braket{i}{j} \bra{k} \rp = \a \sum_{jk} o_{jk} \ketb{j}{k}, \]
and this, in turn, reduces to
\[ i \sum_{jk} \lp a_ko_{jk} \ketb{j}{k} - a_j o_{jk} \ketb{j}{k} \rp = \a \sum_{jk} o_{jk} \ketb{j}{k} . \]
This implies that for all $j,k$ we have $(a_k-a_j) \cdot i o_{jk} = \a o_{jk}$ and we find that $\Re \lb o_{jk} \rb = - \a (a_k - a_j) \cdot \Im \lb o_{jk} \rb$ as well as $\Re \lb o_{jk} \rb =  \a \Im \lb o_{jk} \rb /(a_k-a_j)$, leading to a contradiction. The only valid solution remaining is $o_{jk}=0$ for all $j,k$ exactly.
\end{proof}

\end{document}